\documentclass[11pt]{article}
\usepackage[letterpaper,margin=1in]{geometry}

\usepackage{amsmath,amssymb,amsthm}
\usepackage{microtype}
\usepackage[hidelinks]{hyperref}

\numberwithin{equation}{section}

\newtheorem{theorem}{Theorem}
\newtheorem{lemma}[theorem]{Lemma}
\newtheorem{proposition}[theorem]{Proposition}

\newcommand{\F}{\mathbb F}
\newcommand{\E}{\mathbb E}
\newcommand{\enc}{\operatorname{enc}}
\DeclareMathOperator{\rank}{rank}

\title{Superlogarithmic-Rank Matrix Rigidity for the \\Walsh--Hadamard Transform}
\author{Josh Alman\thanks{Columbia University. \url{josh@cs.columbia.edu}. Supported in part by NSF Grant CCF-2238221 and a Packard Foundation Fellowship.}}
\date{}

\begin{document}
\maketitle

\begin{abstract}
For sufficiently large $N$ which is a power of 2, we prove that changing at
most one percent of the entries of the $N\times N$ Walsh--Hadamard Transform cannot reduce its rank over $\F_3$ to
$\lfloor \log^2 N/80\rfloor$ or below.  To the best of our knowledge,
this is the first constant-fraction rigidity lower bound for an explicit matrix family at a
superlogarithmic target rank over any choice of field, inching toward the parameters in Razborov's program for
communication complexity lower bounds.
\end{abstract}

\section{Introduction}\label{sec:introduction}

For a field
$F$, a matrix $A\in F^{N\times N}$, and a nonnegative integer $r$, the
rank-$r$ rigidity of $A$ over $F$ is the minimum number of entries one
must change to make its rank at most $r$:
$$
 R_A^F(r)=
 \min_{\substack{L\in F^{N\times N}\\ \rank_F L\le r}}
 |\{(i,j)\in[N]^2:A[i,j]\ne L[i,j]\}|.
$$
Matrix rigidity was introduced by Valiant~\cite{Valiant1977} as a route to
circuit lower bounds, and has since found connections to communication
complexity, data structures, coding theory, and algorithms.  For more background,
see Lokam's book~\cite{Lokam2009}, Ramya's survey~\cite{Ramya2020}, and the
course notes of Chou and Golovnev~\cite{ChouGolovnev2020}.  Random matrices are known to be rigid with high probability~\cite{Valiant1977}, but in these applications, one seeks \emph{explicit} matrix families\footnote{A matrix family is explicit if, given $N$, a uniform
algorithm outputs the $N \times N$ matrix $A_N$ in time polynomial in $N$; see~\cite[Open Question 2.1]{Lokam2009} and the discussion following it.} which are rigid.  The quantitative question we study in this paper is how large $r=r(N)$ can be while an
explicit family still satisfies
$$
 R_{A_N}^F(r)\ge \delta N^2
$$
for some fixed $\delta>0$; we call this \emph{constant-fraction rigidity at rank $r$}.

Our main motivation is Razborov's connection between rigidity and
communication complexity~\cite{Razborov1989,Wunderlich2012}.  Roughly, suppose
an $N\times N$ Boolean communication matrix has constant-fraction rigidity at superpolylogarithmic rank
$2^{(\log\log N)^{\omega(1)}}$ over a fixed finite field.  Razborov's result then places the
corresponding function outside $\mathrm{PH}^{cc}$, the
communication-complexity analogue of the polynomial hierarchy.  We call such
a family \emph{Razborov rigid}.

No explicit Razborov rigid family is known.  Prior to this work, for explicit families of matrices, the largest rank at which a constant-fraction rigidity lower bound was
known was only $\Theta(\log N)$.  This is known for code-based and Cauchy
matrices~\cite{Friedman1993,PudlakRodl1994,ShokrollahiSpielmanStemann1997}. Alman and Liang~\cite{AlmanLiang2025} recently proved a nearly optimal bound at this rank for
the Walsh--Hadamard Transform, which is the communication matrix of Inner
Product mod~2, a canonical target for lower bounds against
$\mathrm{PH}^{cc}$. Let
$G=\F_2^n$ and $N=|G|=2^n$.  We index the rows and columns of $H_n$ by
$G$ and set
$$
 x\cdot y=\sum_{i=1}^n x_i y_i\in\F_2,
 \qquad
 H_n[x,y]=(-1)^{x\cdot y}.
$$
We regard $1$ and $-1$ as elements of $\F_3$, so
$H_n\in\F_3^{G\times G}$.  Alman and Liang proved that
\begin{equation}\label{eq:ALbound}
 R_{H_n}^{\F_3}(c\log N)\ge
 \left(\frac12-o(1)\right)N^2
\end{equation}
for some constant
$c>0$. The leading constant $1/2$ is optimal, since replacing $H_n$ by
the all-ones matrix gives rank $1$ after changing
$(1/2-o(1))N^2$ entries.

In this paper, we prove a new matrix rigidity lower bound for $H_n$:
\begin{theorem}\label{thm:main}
For every integer $n\ge2000$, letting $N=2^n$,
$$
 R_{H_n}^{\F_3}\!\left(
 \left\lfloor\frac{\log^2 N}{80}\right\rfloor
 \right)
 >\frac{N^2}{100}.
$$
\end{theorem}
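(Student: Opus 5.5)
We argue by contradiction: suppose $L\in\F_3^{G\times G}$ with $\rank_{\F_3}L\le r:=\lfloor n^2/80\rfloor$ differs from $H_n$ in at most $N^2/100$ entries. The plan is to find a large structured submatrix of $H_n$ on which $L$ agrees with $H_n$ on nearly every entry. On that submatrix $L$ then inherits a lot of rank from $H_n$, which contradicts $\rank L\le r$.

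\textbf{Step 1 (restricting to subspaces).} Choose a dimension $k\approx n/2$, or more likely a fraction of $n$ tuned to the constant $80$. By averaging over random affine subspaces $x_0+V$ and $y_0+W$ of $G$ with $\dim V=\dim W=k$, pick a pair whose $2^k\times 2^k$ submatrix has error density at most $1/100$. On $(x_0+V)\times(y_0+W)$ the matrix $H_n$ equals a rank-one twist $D_1 M D_2$, with $D_1,D_2$ diagonal $\pm1$ matrices, of the bilinear form matrix $(-1)^{v\cdot w}$. For generic $V,W$ this form is nondegenerate, so the block is, up to a twist, an $H_k$. This reduces to rigidity of $H_k$ at the same rank $r$ and density $1/100$, which is useless by itself. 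The extra ingredient must therefore come from combining many subspaces.

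\textbf{Step 2 (polynomial method over $\F_3$).} A rank-$r$ matrix $L=\sum_{t\le r}a_tb_t^{\top}$ should be encoded as follows. Each function $a_t\colon G\to\F_3$ is arbitrary, so a degree-type argument is unavailable directly. I would instead use the structure of $H_n$: identify $\{\pm1\}$ with $\F_2$ via $\enc$ and view rows of $H_n$ as characters. The key lemma to aim for is that if a row $u$ of $L$ agrees with $H_n[x,\cdot]$ on a $0.9$ fraction of columns, then $u$ determines $x$. More usefully, the rows of $L$ that are close to characters span a space containing many characters after local correction. Fix a random subspace $W$ of dimension $m=\Theta(\sqrt{r})\approx n/\sqrt{80}$ and a random coset. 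Restricted to $W$-cosets, good rows are close to characters of $W$. Characters of $W$ have $\F_3$-rank $2^m$, but we want an $\F_3$ statement at the polynomial level. Following Alman--Liang, characters of $\F_2^m$ written over $\F_3$ are products $\prod_i(1-2 x_i w_i)$, and low-rank approximations correspond to low-degree polynomials. Degree-$d$ multilinear polynomials in $m$ variables have dimension $\binom{m}{\le d}$, and with $m\sim n$ and $d=2$ this gives $\sim n^2/2$. That is the source of the $\log^2N$.

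\textbf{Step 3 (the main obstacle).} We need to show that the span of $L$'s rows, restricted to a random coset, must contain (after majority decoding) all functions of the form $\prod_{i\in S}(-1)^{w_i}$ for $|S|\le2$ on $m$ coordinates, giving rank $\ge\binom{m}{2}>r$. I would obtain these via local decoding: $(-1)^{w_i+w_j}$ is a pointwise product of characters, and over $\F_3$ the map $\chi\mapsto\chi^2=1$ together with the identity $\chi_1\chi_2=$ (a linear combination of $L$-rows indexed by $x_1,x_2,x_1+x_2$, via $H[x_1+x_2,y]=H[x_1,y]H[x_2,y]$) lets us express degree-2 character combinations linearly. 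The hard step is to turn the approximate equalities into an exact rank lower bound. I would first try a union bound over the $\binom{m}{2}$ pairs with random shifts: with error density $1/100$ and three queries, each decoding succeeds with probability $\ge0.97$. Then an averaging argument should produce a $\binom{m}{2}$-dimensional subspace of $\operatorname{rowspace}(L)$ restricted to a column set on which the relevant products are linearly independent. Balancing $m$ against the failure probability is what should yield the constant $1/80$ and the threshold $n\ge2000$.
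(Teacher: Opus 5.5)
Your proposal has a genuine gap at the step you flag as the main obstacle, and the mechanism you suggest for it cannot work.

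The row of $H_n$ indexed by $x_1+x_2$ is the entrywise product of the rows indexed by $x_1$ and $x_2$. That does not let you write products of characters as \emph{linear} combinations of rows of $L$. Over $\F_3$, for $a,b\in\{\pm1\}$ one has $ab=2(a+b)^2-1$, which is quadratic. The row of $L$ indexed by $x_1+x_2$ is just one more row that approximately agrees with its target.

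Likewise, majority or local decoding is nonlinear. Decoded functions are therefore not elements of $\operatorname{rowspace}(L)$, and their linear independence says nothing about $\rank L$. Turning approximate agreement into an exact rank bound this way would require a submatrix on which $L$ equals $H_n$ exactly. But for a typical (e.g.\ random) error pattern of density $1/100$, every all-correct submatrix has a side of length $O(\log N)$, hence rank $O(\log N)$. This is the standard obstruction to untouched-submatrix arguments at constant error density, and the paper explicitly avoids extracting an unchanged submatrix.

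The numbers also fail on their own terms. A union bound over $\binom m2$ decodings, each failing with probability up to $0.03$, gives a failure bound far above $1$. With your choice $m\approx n/\sqrt{80}$ one has $\binom m2\approx n^2/160<r$, so even complete success would not contradict $\rank L\le\lfloor n^2/80\rfloor$. Step~1, as you note yourself, is circular.

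The missing idea is to use low rank by \emph{counting} outputs rather than by exhibiting independent rows. For every probe vector $a$, $La$ lies in the column space of $L$, which has at most $3^r$ elements.

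\begin{itemize}
\item \textbf{Probes and signal.} The paper takes $k=2\lfloor n/8\rfloor$ linearly independent columns with random signs, $a_\alpha=\sum_j\varepsilon_j\mathbf e_{y_j}$, and sets $p_\alpha=\enc(H_na_\alpha)$ and $q_\alpha=\enc(La_\alpha)$. The random signs give $\operatorname{Re}\E_\alpha\langle q_\alpha,p_\alpha\rangle>\frac12(19/20)^k$ for an \emph{arbitrary} error pattern of density at most $1/100$.
\item \textbf{Correlation inequality.} A $TT^*$-type inequality gives $|\E_\alpha\langle q_\alpha,p_\alpha\rangle|^{2s}\le 3^r\max_\gamma\E_\beta|\langle p_\gamma,p_\beta\rangle|^s$.
\item \textbf{Overlap bound.} The Hadamard structure controls the right side. $p_\alpha$ expands in the normalized columns indexed by $U_\alpha=\operatorname{span}\{y_1,\ldots,y_k\}$, with weight at most $(3/4)^{k-t}$ on any $t$-dimensional subspace. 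A random $k$-dimensional subspace meets a fixed one in dimension at least $t$ with probability at most $2^{-(n-2k)t}$.
\item \textbf{Conclusion.} Taking $s=3(n-2k)$ yields $3^r\cdot 8\cdot 2^{-(n-2k)k}>\bigl(\tfrac12(19/20)^k\bigr)^{2s}$, i.e.\ $r=\Omega(n^2)$.
\end{itemize}

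So the $\log^2N$ comes from $k(n-2k)\approx n^2/8$, reflecting how many nearly orthogonal probe outputs there are (roughly the size of a Grassmannian). It does not come from counting degree-$2$ monomials.
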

To the best of our knowledge, this is the first constant-fraction rigidity
lower bound for an explicit matrix family at a superlogarithmic target rank over any field. The aforementioned code-based and Cauchy matrix bounds are only
$R_{A_N}(r) \geq \Omega(N^2 \log(N/r) / r)$ for
$r \geq \Omega(\log N)$, which is $o(N^2)$ when
$r=\omega(\log N)$.

This also makes progress on a hardness-amplification target of Alman and
Liang.  They showed, roughly, that improving the lower bound of Equation~(\ref{eq:ALbound}) to rank $(\log N)^{1+\varepsilon}$ would amplify to Razborov rigidity (see~\cite[Theorem 1.7]{AlmanLiang2025} for the precise statement).  Theorem~\ref{thm:main} gives the first constant-fraction rigidity at the needed rank, but proves a $1/100$ error
fraction rather than near-$1/2$.

The low-rank rigidity of $H_n$ also has recent algorithmic significance.  Alman and Rao~\cite{AlmanRao2023} used a constant-rank rigidity upper bound for the Walsh--Hadamard transform to give the smallest-known arithmetic circuits for both $H_n$ and the Discrete Fourier Transform on power-of-2-sized inputs.  Better rigidity upper bounds could lead to faster algorithms along this
route, while lower bounds show limits on this approach; see also \cite[Section~6.2]{AlmanLiang2025} for more details.

Alman and Rao's circuit constructions built on the work of Alman and
Williams~\cite{AlmanWilliams2017} who showed that the Walsh--Hadamard Transform is not \emph{Valiant rigid}: for every sufficiently small
constant $\varepsilon>0$ and every field $F$, there is a constant
$c_\varepsilon>0$ such that
$$
 R_{H_n}^{F}\!\left(\left\lfloor N^{1-c_\varepsilon}\right\rfloor\right)
 \le N^{1+\varepsilon}.
$$
This concerns a much larger target rank and does not rule out Razborov rigidity.

\paragraph*{Notation}
Throughout, all logarithms are base two.  For a finite set $S$ and field
$F$, write $F^S$ for the set of vectors of length $|S|$, and for $v\in F^S$ and $x\in S$, write
$v_x$ for the coordinate at $x$.  For $u,v\in\mathbb C^S$, use
$\langle u,v\rangle=\sum_{x\in S}\overline{u_x}v_x$, where $\overline{u_x}$ is the complex conjugate of $u_x$, and
$\|u\|_2=\sqrt{\langle u,u\rangle}$.

\section{The organizing correlation inequality}\label{sec:few-outputs}

The proof is organized around the following inequality.

\begin{lemma}[Small output set versus small average overlap]
\label{lem:few-outputs}
Let $\Omega$ be a nonempty finite set and $d$ be a positive integer. For every $\gamma\in\Omega$, let
$p_\gamma,q_\gamma\in\mathbb C^d$ be unit vectors.  Suppose that the
vectors $q_\gamma$ take at most $M$ distinct values, where $M$ is a
positive integer.  Then, for every positive integer $s$,
\begin{equation}
 \left|\E_{\alpha\in\Omega}\langle q_\alpha,p_\alpha\rangle\right|^{2s}
 \le
 M\max_{\gamma\in\Omega}
 \E_{\beta\in\Omega}|\langle p_\gamma,p_\beta\rangle|^s,
 \label{eq:few-outputs}
\end{equation}
where the expectations are over $\alpha,\beta$ drawn independently and uniformly from $\Omega$.
\end{lemma}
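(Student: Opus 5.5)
The plan is to group $\Omega$ according to the value of $q$, so that the left-hand side becomes a weighted average of norms of class-averages of the $p$'s. Each norm is then controlled by a self-correlation of $p$ within its class. Jensen's inequality moves the $2s$-th power inside, and a final sum over the $M$ classes produces the factor $M$.

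\textbf{Step 1 (partition by output).} Let $u_1,\dots,u_m$ with $m\le M$ be the distinct values of $q_\gamma$, and set $\Omega_j=\{\gamma:q_\gamma=u_j\}$. Write $w_j=|\Omega_j|/|\Omega|$, so that $\sum_j w_j=1$, and let $P_j=\E_{\alpha\in\Omega_j}p_\alpha$. By linearity,
$$
X:=\E_{\alpha\in\Omega}\langle q_\alpha,p_\alpha\rangle=\sum_j w_j\langle u_j,P_j\rangle .
$$
Since each $u_j$ is a unit vector, Cauchy--Schwarz gives $|X|\le\sum_j w_j\|P_j\|_2$. The function $t\mapsto t^{2s}$ is convex on $[0,\infty)$ and the $w_j$ form a probability distribution, so Jensen's inequality yields
$$
|X|^{2s}\le\sum_j w_j\|P_j\|_2^{2s}.
$$

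\textbf{Step 2 (norm of a class average).} Since $\|P_j\|_2^2$ is real and nonnegative,
$$
\|P_j\|_2^2=\E_{\alpha,\beta\in\Omega_j}\langle p_\alpha,p_\beta\rangle\le\E_{\alpha,\beta\in\Omega_j}|\langle p_\alpha,p_\beta\rangle| .
$$
Applying Jensen again with the convex map $t\mapsto t^s$ gives
$$
\|P_j\|_2^{2s}\le\E_{\alpha,\beta\in\Omega_j}|\langle p_\alpha,p_\beta\rangle|^s .
$$

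\textbf{Step 3 (converting class expectations to global ones).} This step is where the factor $M$ must appear instead of something like $1/\min_j w_j$, so the normalization has to be tracked carefully. Write $f(\alpha,\beta)=|\langle p_\alpha,p_\beta\rangle|^s\ge0$. Since $w_j/|\Omega_j|=1/|\Omega|$,
$$
w_j\,\E_{\alpha,\beta\in\Omega_j}f(\alpha,\beta)=\E_{\alpha\in\Omega_j}\Big[\tfrac{1}{|\Omega|}\sum_{\beta\in\Omega_j}f(\alpha,\beta)\Big]\le\E_{\alpha\in\Omega_j}\E_{\beta\in\Omega}f(\alpha,\beta).
$$
The last inequality uses only $f\ge0$. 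The right-hand side is at most $\max_\gamma\E_{\beta\in\Omega}f(\gamma,\beta)$.

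\textbf{Conclusion.} Combining Steps 1--3 and summing over the $m\le M$ classes gives
$$
|X|^{2s}\le\sum_j w_j\|P_j\|_2^{2s}\le\sum_j w_j\,\E_{\alpha,\beta\in\Omega_j}f(\alpha,\beta)\le M\max_{\gamma\in\Omega}\E_{\beta\in\Omega}|\langle p_\gamma,p_\beta\rangle|^s,
$$
which is \eqref{eq:few-outputs}.
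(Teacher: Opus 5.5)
Your proof is correct, but it takes a different route from the paper's.

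\textbf{The paper's route.} It first pushes the power onto individual terms, $b^s\le\E_\alpha|\langle q_\alpha,p_\alpha\rangle|^s$, which takes absolute values before averaging. It then lifts to tensor powers $P_\alpha=p_\alpha^{\otimes s}$, $Q_q=q^{\otimes s}$ and rotates phases so that each term is nonnegative. Finally it runs a single $TT^*$ step: Cauchy--Schwarz over the $|\mathcal C|\le M$ class sums $S_q$ produces both the square and the factor $M$.

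\textbf{Your route.} You do the almost-orthogonality step on the original vectors, class by class. Since you never take absolute values before averaging, you need no phase rotation. You apply the outer power $2s$ by Jensen with the class weights $w_j$, and you get the inner power $s$ from Jensen applied to $\|P_j\|_2^2$. The factor $M$ then comes from the bookkeeping $w_j/|\Omega_j|=1/|\Omega|$ and from bounding each of the $m$ per-class terms by the maximum.

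\textbf{What each buys.} Your argument is more elementary, since it uses no tensor products. However, it genuinely needs the $\max_\gamma$. The paper's chain actually yields the slightly stronger averaged bound $M\,\E_{\alpha,\beta}\bigl[\mathbf 1_{\{q_\alpha=q_\beta\}}|\langle p_\alpha,p_\beta\rangle|^s\bigr]\le M\,\E_{\alpha,\beta}|\langle p_\alpha,p_\beta\rangle|^s$. Your intermediate quantity $\sum_j w_j\E_{\alpha,\beta\in\Omega_j}|\langle p_\alpha,p_\beta\rangle|^s$ can be much larger than that average. This happens when a small class has highly correlated $p$'s, because each class is weighted by $w_j$ rather than $w_j^2$. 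For the application this difference is irrelevant, since Proposition~\ref{prop:average-overlap} bounds the maximum over $\gamma$ anyway.
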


The inequality says that large correlation between the matched pairs
$(q_\alpha,p_\alpha)$ requires either many possible $q_\alpha$'s or
large average powered overlap among the $p_\alpha$'s.

In Subsection~\ref{subsec:probes}, we will construct the vectors
$p_\alpha,q_\alpha$ from random signed column sums of $H_n$ and of a
hypothetical low-rank approximation.  Low rank will make the set of possible
$q_\alpha$'s small, bounding $M$.  In Subsection~\ref{subsec:signal}, the
sparsity of the approximation error will give a lower bound on the left side of
\eqref{eq:few-outputs}.  In Subsection~\ref{subsec:overlap}, we will use the structure of $H_n$ to analyze the vectors $p_\alpha$ by expanding them in the normalized
columns of $H_n$.  The orthogonality and coordinatewise multiplication rule of
these columns, together with a tail bound for intersections of random subspaces,
will give an upper bound on the right side of
\eqref{eq:few-outputs}.  We combine the bounds in Subsection~\ref{subsec:combine}.

Lemma~\ref{lem:few-outputs} follows from a generalized Bessel inequality, also known as the
almost-orthogonality or $TT^*$ method; see
Tao's exposition~\cite[Proposition~2 and Remark~3]{Tao2015LargeSieve}. Nonetheless, for completeness we directly prove Lemma~\ref{lem:few-outputs} in Section~\ref{sec:few-outputs-proof} below.

\section{Proof of Theorem~\ref{thm:main}}\label{sec:main-proof}

\begin{proof}
Assume to the contrary that $L\in\F_3^{G\times G}$ satisfies
$$
 r:=\rank_{\F_3}L
 \le\left\lfloor\frac{n^2}{80}\right\rfloor
$$
and differs from $H_n$ in at most $N^2/100$ positions.  Put
$$
 \Delta:=H_n-L\in\F_3^{G\times G}.
$$
All matrix-vector products below are over $\F_3$.

\subsection{Random signed column probes}\label{subsec:probes}

Choose
\begin{equation}
 k:=2\left\lfloor\frac n8\right\rfloor.
 \label{eq:k}
\end{equation}
Thus $k$ is even and $k\le n/4$.

Let $\Omega$ be the set of pairs
$$
 \alpha=(y_1,\ldots,y_k;\varepsilon_1,\ldots,\varepsilon_k),
$$
where $(y_1,\ldots,y_k)\in G^k$ is an ordered tuple of linearly independent vectors,
and each $\varepsilon_j\in\{1,-1\}$.  We draw $\alpha$ uniformly from
$\Omega$ in the probabilistic arguments below.

For $y\in G$, let $\mathbf e_y\in\F_3^G$ be the coordinate indicator
vector of $y$: it has a $1$ in coordinate $y$ and $0$ elsewhere.
For $\alpha\in\Omega$, define the probe vector
\begin{equation}
 a_\alpha:=\sum_{j=1}^k\varepsilon_j\mathbf e_{y_j}\in\F_3^G.
 \label{eq:signed-column-sum}
\end{equation}
This is a $k$-sparse vector whose nonzero entries have random $\{-1,1\}$ values. For every $A\in\F_3^{G\times G}$,
$$
 Aa_\alpha=\sum_{j=1}^k\varepsilon_j A[\mathord\cdot,y_j]\in\F_3^G,
$$
where $A[\mathord\cdot,y_j]$ is the column indexed by $y_j$.

Let $\omega:=e^{2\pi\mathrm i/3}\in\mathbb C$ be the cube root of unity, where $\mathrm i^2=-1$.
Interpreting elements of $\F_3$ as exponents modulo $3$, define the phase
encoding $\enc:\F_3^G\to\mathbb C^G$ by
$$
 \enc(v)_x:=N^{-1/2}\omega^{v_x}\qquad(x\in G).
$$
For each $\alpha\in\Omega$, define the unit vectors
$$
 p_\alpha:=\enc(H_n a_\alpha),
 \qquad
 q_\alpha:=\enc(La_\alpha)
 \quad\text{in }\mathbb C^G.
$$
The image of $L$ is an $r$-dimensional space over $\F_3$, so it has
$3^r$ elements.  Since every $La_\alpha$ lies in this space, the vectors
$q_\alpha$ take at most $3^r$ values.  Thus low rank controls the factor
$M$ in Lemma~\ref{lem:few-outputs}.

\subsection{A correlation signal survives arbitrary errors}\label{subsec:signal}

For every $a\in\F_3^G$, define the complex correlation between the unit vectors
\begin{align}
 z(a)
 &:=\langle\enc(La),\enc(H_n a)\rangle \notag\\
 &=\frac1N\sum_{x\in G}
   \overline{\omega^{(La)_x}}\,\omega^{(H_n a)_x} \notag\\
 &=\frac1N\sum_{x\in G}\omega^{(\Delta a)_x}.
 \label{eq:z-formula}
\end{align}
For $\alpha\in\Omega$, set
$$
 z_\alpha:=z(a_\alpha)=\langle q_\alpha,p_\alpha\rangle.
$$

\begin{lemma}[Correlation lower bound]\label{lem:signal}
\begin{equation}
 \operatorname{Re}\E_{\alpha\in\Omega}z_\alpha
 >
 \frac12\left(\frac{19}{20}\right)^k.
 \label{eq:signal}
\end{equation}
\end{lemma}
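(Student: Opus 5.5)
We need a lower bound on $\operatorname{Re}\E z_\alpha$, where $z(a)=\frac1N\sum_x\omega^{(\Delta a)_x}$. The plan is to swap the order of averaging and work row by row:
$$
\operatorname{Re}\E_\alpha z_\alpha=\frac1N\sum_{x\in G}\E_\alpha\operatorname{Re}\omega^{(\Delta a_\alpha)_x}.
$$
For a fixed row $x$, $(\Delta a_\alpha)_x=\sum_j\varepsilon_j\Delta[x,y_j]$. If none of the sampled columns $y_1,\dots,y_k$ hits a nonzero entry of row $x$ of $\Delta$, this value is $0$ and the term equals $1$. Otherwise, because the signs $\varepsilon_j$ are uniform and independent of the $y_j$, we can condition on $(y_1,\dots,y_k)$ and average over the signs. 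Consider any $j$ with $\Delta[x,y_j]=c\neq0$. Flipping $\varepsilon_j$ turns the value $t+c$ into $t-c$, and
$$
\tfrac12\left(\omega^{t+c}+\omega^{t-c}\right)=\omega^t\cos(2\pi/3)=-\tfrac12\omega^t,
$$
since $\{c,-c\}=\{1,2\}$. In general the conditional average factorizes as $\prod_j\E_{\varepsilon_j}\omega^{\varepsilon_j\Delta[x,y_j]}$. Each factor is $1$ if $\Delta[x,y_j]=0$ and $-1/2$ otherwise, so the conditional expectation equals $(-1/2)^{m}$, where $m$ is the number of hits in row $x$. This is real and at least $-1/2$ whenever $m\ge1$.

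Let $P_x$ be the probability that no $y_j$ lies in the support $S_x$ of row $x$ of $\Delta$. The row-$x$ term is then at least $P_x-\frac12(1-P_x)=\frac32P_x-\frac12$, and so
$$
\operatorname{Re}\E z_\alpha\ge\frac1N\sum_x\Bigl(\tfrac32P_x-\tfrac12\Bigr).
$$
Bounding $P_x$ from below requires the distribution of $(y_1,\dots,y_k)$, which is uniform over linearly independent tuples. Sequentially, $y_j$ is uniform on $G\setminus\mathrm{span}(y_1,\dots,y_{j-1})$, a set of size at least $N-2^{k}\ge N(1-2^{k-n})$. Writing $f_x=|S_x|/N$, this gives
$$
P_x\ge\Bigl(1-\frac{f_xN}{N-2^k}\Bigr)_+^{k}.
$$
Averaging this over $x$ requires convexity in $f_x$. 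The function $(1-f/(1-\eta))^k$ with $\eta=2^{k-n}$ is convex, so Jensen gives $\frac1N\sum_xP_x\ge(1-\bar f/(1-\eta))^k$, where $\bar f=\frac1N\sum_x f_x\le1/100$. Since $\eta\le2^{-3n/4}$ is tiny for $n\ge2000$, we get $\frac1N\sum_xP_x\ge(1-0.0101)^k$.

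Here the naive bound is in trouble: $\frac32(0.99)^k-\frac12$ does not exceed $\frac12(19/20)^k$ for large $k$, because the $-\frac12$ term swamps it. This is the main obstacle, and the remedy is to exploit that the hit contributions are $(-1/2)^m$ with alternating sign rather than crudely $-1/2$.

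The cleaner approach is to avoid conditioning on hits at all and compute the full product expectation directly. If the $y_j$ were independent and uniform, row $x$ would contribute
$$
\Bigl(\E_y\E_\varepsilon\omega^{\varepsilon\Delta[x,y]}\Bigr)^k=(1-\tfrac32f_x)^k,
$$
since a nonzero entry contributes $-1/2$ and a zero entry contributes $1$. This is at least $(1-\tfrac32f_x)^k$ with $k$ even. The function $(1-\frac32f)^k$ is convex, so Jensen gives an average of at least $(1-0.015)^k=(0.985)^k$, which comfortably exceeds $(19/20)^k$.

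To handle the linear-independence constraint, I would couple the constrained tuple with i.i.d.\ uniform tuples: the probability that $k$ i.i.d.\ uniform vectors are dependent is at most $2^{k-n}$. Hence the two expectations differ by at most $2\cdot2^{k-n}$, because each term has absolute value at most $1$. This gives $\operatorname{Re}\E z_\alpha\ge(0.985)^k-2^{k-n+1}$. Since $k\le n/4$, the error term is at most $2^{1-3n/4}$. This is far below $\frac12(19/20)^k\ge\frac12(0.95)^{n/4}$, which settles the inequality with a strict margin. The one detail to verify carefully is the evenness of $k$ (or simply $1-\frac32f_x\ge0$, which can fail when $f_x>2/3$). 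This is why I would use evenness: it makes $(1-\frac32f)^k\ge0$ convex on all of $[0,1]$ so that Jensen applies.
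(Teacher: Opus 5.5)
Your final argument is correct and is essentially the paper's proof. You pass to i.i.d.\ columns and signs, compute each row's contribution exactly as $(1-\tfrac32 f_x)^k$, apply Jensen using that $k$ is even to get $(197/200)^k$, and transfer to $\Omega$ using that i.i.d.\ columns are dependent with probability $\eta<2^{k-n}$; the abandoned hit-counting detour was not needed. The only difference is that you absorb the conditioning as a total-variation loss of $2\eta$, whereas the paper uses $\operatorname{Re}\E[z(a)\mid I^c]\le 1$ and divides by $1-\eta$ to lose only $\eta$. Either loss is negligible against $\tfrac12(19/20)^k$.
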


\begin{proof}
Let
$$
 \delta:=\frac{|\{(x,y)\in G^2:\Delta[x,y]\ne0\}|}{N^2}
 \le\frac1{100}
$$
be the fraction of entries where $H_n$ and $L$ differ.  Before arguing about $\Omega$, we first consider the variant where we sample columns
with replacement which need not be linearly independent. Choose $Y_1,\ldots,Y_k$ independently and uniformly from
$G$, together with independent uniform signs
$\varepsilon_j\in\{1,-1\}$, and put
$$
 a=\sum_{j=1}^k\varepsilon_j\mathbf e_{Y_j}\in\F_3^G.
$$
For a fixed row $x\in G$, let $d_x\in[0,1]$ be the fraction of nonzero
entries in that row of $\Delta$.  For each $j$, consider the random phase
$\omega^{\varepsilon_j\Delta[x,Y_j]}$.  If
$\Delta[x,Y_j]=0$, this phase is $1$.  If
$\Delta[x,Y_j]\ne0$, then this entry is $1$ or $-1$ in $\F_3$, and
averaging over the random sign $\varepsilon_j$ gives that the expected value
of the phase is
$$
 \frac{\omega+\omega^{-1}}2=-\frac12.
$$
Since a fraction $1-d_x$ of the row entries are zero and a fraction $d_x$
are nonzero,
$$
 \E_{Y_j,\varepsilon_j}
 \omega^{\varepsilon_j\Delta[x,Y_j]}
 =(1-d_x)\cdot1+d_x\cdot\left(-\frac12\right)
 =1-\frac32d_x.
$$
Thus the random signs make the two possible nonzero errors indistinguishable:
the expectation depends only on whether the sampled entry was changed.  Since
$$
 \omega^{(\Delta a)_x}
 =\prod_{j=1}^k\omega^{\varepsilon_j\Delta[x,Y_j]},
$$
independence of the $k$ pairs $(Y_j,\varepsilon_j)$ and
\eqref{eq:z-formula} give
$$
 \E z(a)=\frac1N\sum_{x\in G}\left(1-\frac32d_x\right)^k.
$$
The average of the $d_x$'s is $\delta$.  Some values
$1-\tfrac32d_x$ may be negative, but since $k$ is even,
$\rho\mapsto\rho^k$ is convex on all of $\mathbb R$, and so Jensen's inequality
gives
\begin{equation}
 \E z(a)
 \ge\left(1-\frac32\delta\right)^k
 \ge\left(\frac{197}{200}\right)^k.
 \label{eq:independent-signal}
\end{equation}

This is nearly our desired result, but we need to modify it to apply to $\Omega$.
Let $I$ be the event that $Y_1,\ldots,Y_k$ are linearly independent, and
write $\eta:=\Pr(I^c)$.  We bound $\eta$ by a standard argument: If the first $j$ choices are independent, the next
choice lies in their span with probability $2^{j-n}$, so a union bound gives
$$
 \eta\le\sum_{j=0}^{k-1}2^{j-n}<2^{k-n}.
$$
Since $k\le n/4$ and $n\ge2000$, we have $n>2k+1$, and hence
\begin{equation}
 \eta<2^{k-n}<2^{-k-1}
 \le\frac12\left(\frac{19}{20}\right)^k.
 \label{eq:eta-small}
\end{equation}
Because $z(a)$ is an average of complex numbers of magnitude $1$, we know
$\operatorname{Re}\E[z(a)\mid I^c]\le1$.  Taking real parts in
$$
 \E z(a)=(1-\eta)\E[z(a)\mid I]+\eta\E[z(a)\mid I^c]
$$
and using \eqref{eq:independent-signal} gives
$$
 (1-\eta)\operatorname{Re}\E[z(a)\mid I]
 \ge\left(\frac{197}{200}\right)^k-\eta.
$$
By \eqref{eq:eta-small}, the right side is positive, and
\begin{align*}
 \operatorname{Re}\E[z(a)\mid I]
 &\ge \frac{(197/200)^k-\eta}{1-\eta}\\
 &\ge \left(\frac{197}{200}\right)^k-\eta\\
 &>\left(\frac{197}{200}\right)^k
   -\frac12\left(\frac{19}{20}\right)^k\\
 &>\frac12\left(\frac{19}{20}\right)^k,
\end{align*}
where the last inequality uses $197/200>19/20$.
Conditioned on $I$, the tuple $(Y_1,\ldots,Y_k)$ is uniform among all ordered
linearly independent $k$-tuples, while the signs remain independent and
uniform.  The conditioned pair therefore has exactly the uniform distribution
on $\Omega$, proving the lemma.
\end{proof}

\subsection{Column expansions and overlap}\label{subsec:overlap}

For
$\alpha=(y_1,\ldots,y_k;\varepsilon_1,\ldots,\varepsilon_k)\in\Omega$,
define
$$
 U_\alpha:=\operatorname{span}_{\F_2}\{y_1,\ldots,y_k\}\le G.
$$
This is a $k$-dimensional subspace (since $y_1,\ldots,y_k$ are linearly independent).  Every such subspace has the
same number of ordered bases, so $U_\alpha$ is uniform among the
$k$-dimensional subspaces of $G$.

For $y\in G$, let $h_y\in\mathbb C^G$ be the column of $H_n$
indexed by $y$, viewed as a complex vector:
$$
 (h_y)_x=H_n[x,y]=(-1)^{x\cdot y}\qquad(x\in G).
$$
The normalized columns $N^{-1/2}h_y$, $y\in G$, form an orthonormal
basis of $\mathbb C^G$: for $y,y'\in G$,
$$
 \left\langle N^{-1/2}h_y,N^{-1/2}h_{y'}\right\rangle
 =\frac1N\sum_{x\in G}(-1)^{x\cdot(y-y')}
 =\begin{cases}
 1,&y=y',\\
 0,&y\ne y'.
 \end{cases}
$$
Indeed, when $y-y'\ne0$, the value $x\cdot(y-y')$ equals $1$ for
exactly half the choices of $x\in G$.

Since the normalized columns form a basis, there are unique coefficients
$b_{\alpha,y}\in\mathbb C$ such that
$$
 p_\alpha=\sum_{y\in G}b_{\alpha,y}N^{-1/2}h_y.
$$
For a subspace $W\le G$, define
$$
 p_{\alpha,W}:=\sum_{y\in W}b_{\alpha,y}N^{-1/2}h_y.
$$
In other words, $p_{\alpha,W}$ is obtained from the expansion of $p_\alpha$
in the (normalized) columns of $H_n$ by keeping only the columns whose indices
lie in $W$.

\begin{lemma}[Weight on columns indexed by a subspace]
\label{lem:column-weight}
If $\alpha\in\Omega$ and $W\le U_\alpha$ is a subspace with dimension $t$, then
\begin{equation}
 \|p_{\alpha,W}\|_2^2
 \le\left(\frac34\right)^{k-t}.
 \label{eq:column-weight}
\end{equation}
\end{lemma}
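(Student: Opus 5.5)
The plan is to compute the coefficients $b_{\alpha,y}$ exactly. The key observation is that $p_\alpha$ depends on $x$ only through the $k$ bits $x\cdot y_1,\ldots,x\cdot y_k$, and in those bits it factors as a product. Write $\alpha=(y_1,\ldots,y_k;\varepsilon_1,\ldots,\varepsilon_k)$. Then $(H_na_\alpha)_x=\sum_j\varepsilon_j(-1)^{x\cdot y_j}\in\F_3$, so
$$
 (p_\alpha)_x=N^{-1/2}\prod_{j=1}^k g_j(x\cdot y_j),
 \qquad g_j(u):=\omega^{\varepsilon_j(-1)^u}\quad(u\in\F_2).
$$
By orthonormality of the normalized columns, $b_{\alpha,y}=\langle N^{-1/2}h_y,p_\alpha\rangle=\frac1N\sum_x(-1)^{x\cdot y}\prod_j g_j(x\cdot y_j)$.

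Next, expand each factor in the two characters of $\F_2$: $g_j(u)=c_j(0)+c_j(1)(-1)^u$. A direct computation gives $c_j(0)=(\omega+\omega^{-1})/2=-1/2$ and $c_j(1)=\pm(\omega-\omega^{-1})/2$. Hence $|c_j(0)|^2=1/4$ and $|c_j(1)|^2=3/4$, independently of $\varepsilon_j$. Multiplying out,
$$
 \prod_j g_j(x\cdot y_j)=\sum_{s\in\F_2^k}\Big(\prod_j c_j(s_j)\Big)(-1)^{x\cdot\sum_j s_jy_j}.
$$
Since the $y_j$ are linearly independent, the map $s\mapsto\sum_j s_jy_j$ is a bijection $\F_2^k\to U_\alpha$. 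Comparing with the column expansion, $b_{\alpha,y}=\prod_j c_j(s_j)$ when $y=\sum_j s_jy_j\in U_\alpha$, and $b_{\alpha,y}=0$ for $y\notin U_\alpha$.

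Define a product probability distribution on $\F_2^k$ by $\mu(s):=\prod_j|c_j(s_j)|^2$, so each bit independently equals $1$ with probability $3/4$. Then for $W\le U_\alpha$ we get $\|p_{\alpha,W}\|_2^2=\sum_{s\in S}\mu(s)=\Pr_{s\sim\mu}[s\in S]$. Here $S\le\F_2^k$ is the preimage of $W$ under the bijection above, so $S$ is a $t$-dimensional subspace.

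The remaining step, which I expect to be the only nontrivial point, is to show $\Pr_\mu[s\in S]\le(3/4)^{k-t}$. Take a basis of $S$ as the rows of a $t\times k$ matrix of rank $t$. Choose a set $I$ of $t$ coordinates on which its columns are independent, so the projection $S\to\F_2^I$ is injective. Let $J$ be the remaining $k-t$ coordinates. Condition on $s_I$: there is at most one element of $S$ with those coordinates on $I$, so at most one value of $s_J$ lands in $S$. Since the bits are independent, any fixed value of $s_J$ has probability at most $\max(1/4,3/4)^{k-t}=(3/4)^{k-t}$. Averaging over $s_I$ gives \eqref{eq:column-weight}.
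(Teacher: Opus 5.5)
Your proposal is correct and follows the paper's proof essentially step for step. Your character expansion $g_j(u)=c_j(0)+c_j(1)(-1)^u$, with $c_j(0)=-\tfrac12$ and $c_j(1)=\varepsilon_j\tfrac{\mathrm i\sqrt3}{2}$, is the same as the paper's identity $\omega^\sigma=-\tfrac12+\tfrac{\mathrm i\sqrt3}{2}\sigma$; it gives the same column expansion with Bernoulli$(3/4)$ product weights, and your coordinate-projection conditioning argument is the paper's Gaussian-elimination step bounding $\Pr(X\in D)\le(3/4)^{k-t}$.
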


\begin{proof}
Write $\alpha=(y_1,\ldots,y_k;\varepsilon_1,\ldots,\varepsilon_k)$.  For
$u=(u_1,\ldots,u_k)\in\F_2^k$, define
$$
 \phi_\alpha(u):=\sum_{j=1}^k u_jy_j
 =\sum_{j:u_j=1}y_j\in U_\alpha,
$$
and write $|u|$ to denote the number of coordinates of $u$ equal to $1$.  Since
$y_1,\ldots,y_k$ form a basis of $U_\alpha$, the map
$\phi_\alpha:\F_2^k\to U_\alpha$ is a linear bijection.

For each $x\in G$,
$$
 (H_n a_\alpha)_x=\sum_{j=1}^k\varepsilon_j(h_{y_j})_x
 \quad\text{in }\F_3,
 \qquad
 (p_\alpha)_x=N^{-1/2}\prod_{j=1}^k
 \omega^{\varepsilon_j(h_{y_j})_x}.
$$
For $\sigma\in\{1,-1\}$, we can verify that $\omega=-\frac12+\frac{\mathrm i\sqrt3}{2}$ satisfies
$$
 \omega^\sigma=-\frac12+\frac{\mathrm i\sqrt3}{2}\,\sigma.
$$
Substituting this identity gives
$$
 (p_\alpha)_x=N^{-1/2}\prod_{j=1}^k
 \left(-\frac12+\frac{\mathrm i\sqrt3}{2}\,
 \varepsilon_j(h_{y_j})_x\right).
$$
When this product is expanded, $u_j=0$ means that we choose the constant
term $-1/2$ from the $j$th factor, while $u_j=1$ means that we choose
$(\mathrm i\sqrt3/2)\varepsilon_j(h_{y_j})_x$.  Using, for every
$u\in\F_2^k$ and $x\in G$,
$$
 \prod_{j:u_j=1}(h_{y_j})_x
 =(-1)^{x\cdot\sum_{j:u_j=1}y_j}
 =(h_{\phi_\alpha(u)})_x,
$$
we obtain
\begin{equation}
 p_\alpha
 =
 \sum_{u\in\F_2^k}
 \left(-\frac12\right)^{k-|u|}
 \left(\frac{\mathrm i\sqrt3}{2}\right)^{|u|}
 \left(\prod_{j:u_j=1}\varepsilon_j\right)
 N^{-1/2}h_{\phi_\alpha(u)}.
 \label{eq:column-expansion}
\end{equation}
As $u$ ranges over $\F_2^k$, the bijection $\phi_\alpha$ indexes each element
of $U_\alpha$ exactly once.  The coefficient of the normalized column indexed
by $\phi_\alpha(u)$ in (\ref{eq:column-expansion}) has squared magnitude
\begin{equation}
 \left(\frac14\right)^{k-|u|}
 \left(\frac34\right)^{|u|}.
 \label{eq:coefficient-weight}
\end{equation}
These values are exactly the probability mass function of a random vector
$X=(X_1,\ldots,X_k)\in\F_2^k$ with independent coordinates satisfying
$\Pr(X_j=1)=3/4$ for all $j$.

Since $\phi_\alpha$ is a linear bijection and $\dim W=t$, the subspace
$D:=\phi_\alpha^{-1}(W)\le\F_2^k$ also has dimension~$t$.
Equations~\eqref{eq:column-expansion} and \eqref{eq:coefficient-weight} give
$$
 \|p_{\alpha,W}\|_2^2=\Pr(X\in D).
$$
By Gaussian elimination, there is a set
$J\subseteq\{1,\ldots,k\}$ of $t$ coordinates for which the projection
$D\to\F_2^J$ is bijective.  Fix $x_J\in\F_2^J$.  Conditioned on
$X_J=x_J$, membership in $D$ forces the remaining $k-t$ coordinates of $X$
to equal one particular pattern.  Those coordinates remain independent, and
each matches its required value with probability at most $3/4$.  Hence
$$
 \Pr(X\in D\mid X_J=x_J)
 \le\left(\frac34\right)^{k-t}.
$$
Averaging over $X_J$ proves the lemma.
\end{proof}

For $\alpha,\beta\in\Omega$, set
$$
 W:=U_\alpha\cap U_\beta\le G,
 \qquad
 t:=\dim_{\F_2}W\in\{0,\ldots,k\}.
$$
Equation~\eqref{eq:column-expansion} shows that the expansion of
$p_\alpha$ uses only normalized columns indexed by $U_\alpha$, and the
same holds for $p_\beta$ with $U_\beta$.  The only columns that appear in
both expansions are therefore those indexed by $W$.  Since the normalized
columns are orthonormal, Lemma~\ref{lem:column-weight} and Cauchy--Schwarz give
\begin{equation}
 \begin{aligned}
 |\langle p_\alpha,p_\beta\rangle|
 &=|\langle p_{\alpha,W},p_{\beta,W}\rangle|\\
 &\le\|p_{\alpha,W}\|_2\,\|p_{\beta,W}\|_2\\
 &\le\left(\frac34\right)^{k-t}.
 \end{aligned}
 \label{eq:overlap}
\end{equation}
Thus a large overlap is possible only when the two subspaces indexing
these column expansions have a large intersection.

We next bound the intersection dimension of a random subspace with a fixed
one.

\begin{lemma}[Tail bound for random subspace intersections]
\label{lem:intersection}
Fix a $k$-dimensional subspace $U\le G$, and let $V$ be a uniformly random
$k$-dimensional subspace of $G$.  For every $0\le t\le k$,
\begin{equation}
 \Pr\bigl(\dim(U\cap V)\ge t\bigr)
 \le2^{-(n-2k)t}.
 \label{eq:intersection-tail}
\end{equation}
\end{lemma}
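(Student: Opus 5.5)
The plan is to bound the probability by a union bound over $t$-dimensional subspaces $W\le U$, using the fact that $\dim(U\cap V)\ge t$ holds exactly when $V$ contains some $t$-dimensional subspace $W$ of $U$. The case $t=0$ is trivial, since the right side equals $1$, so I assume $t\ge1$. This gives
$$
 \Pr\bigl(\dim(U\cap V)\ge t\bigr)\le\sum_{W\le U,\ \dim W=t}\Pr(W\le V).
$$

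Two counting estimates then finish the argument.

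\emph{Number of choices of $W$.} The number of $t$-dimensional subspaces of the $k$-dimensional space $U$ is the Gaussian binomial
$$
 \binom{k}{t}_2=\prod_{i=0}^{t-1}\frac{2^{k-i}-1}{2^{t-i}-1}.
$$

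\emph{Probability that a fixed $W$ lies in $V$.} By symmetry, for fixed $W$ of dimension $t$,
$$
 \Pr(W\le V)=\frac{\binom{k}{t}_2}{\binom{n}{t}_2}.
$$
To see this, double count pairs $(W,V)$ with $W\le V$: each $V$ contains $\binom{k}{t}_2$ such $W$, and by transitivity of $\mathrm{GL}(G)$ on $t$-dimensional subspaces, every $W$ lies in the same number of $V$'s.

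Combining these, the bound becomes
$$
 \Pr\bigl(\dim(U\cap V)\ge t\bigr)\le\frac{\binom{k}{t}_2^2}{\binom{n}{t}_2}
 =\prod_{i=0}^{t-1}\frac{(2^{k-i}-1)^2}{(2^{t-i}-1)(2^{n-i}-1)}.
$$
It suffices to show that each factor is at most $2^{-(n-2k)}$. I rewrite the $i$-th factor as
$$
 \frac{(2^{k-i}-1)^2}{(2^{t-i}-1)(2^{n-i}-1)}
 =\frac{2^{k-i}-1}{2^{t-i}-1}\cdot\frac{2^{k-i}-1}{2^{n-i}-1}.
$$
For the second ratio, the inequality $\frac{2^a-1}{2^b-1}\le 2^{a-b}$ for $a\le b$ gives a bound of $2^{k-n}$. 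For the first ratio, using $2^{t-i}-1\ge 2^{t-i-1}$ (valid since $t-i\ge1$) gives a bound of $2^{k-t+1}$. This only yields $2^{2k-n-t+1}$ per factor, which is weaker than needed when $t\le1$, so the per-factor bound must be handled more carefully.

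The cleaner route is to bound the whole product at once. Since $\binom{k}{t}_2\le\binom{n}{t}_2$, we have
$$
 \frac{\binom{k}{t}_2^2}{\binom{n}{t}_2}\le\binom{k}{t}_2\cdot\frac{\binom{k}{t}_2}{\binom{n}{t}_2}.
$$
The standard estimate $\binom{k}{t}_2\le 4\cdot 2^{t(k-t)}$ loses a constant factor, so I would avoid it. Instead I bound termwise: $\frac{2^{k-i}-1}{2^{n-i}-1}\le 2^{k-n}$, which gives $\frac{\binom{k}{t}_2}{\binom{n}{t}_2}\le 2^{-(n-k)t}$. It then remains to show $\binom{k}{t}_2\le 2^{kt}$. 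This holds because the number of $t$-dimensional subspaces is at most the number of ordered $t$-tuples of vectors in $U$, which is $2^{kt}$; equivalently, each factor satisfies $\frac{2^{k-i}-1}{2^{t-i}-1}\le 2^{k}$. Multiplying the two bounds gives $2^{kt}\cdot 2^{-(n-k)t}=2^{-(n-2k)t}$, as required.

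I expect the main obstacle to be avoiding constant-factor losses in the Gaussian binomial estimates. The crude bound $\binom{k}{t}_2\le 2^{kt}$ via counting tuples is exactly what makes the exponent come out to precisely $(n-2k)t$. I would also state the double-counting step carefully, or alternatively derive $\Pr(W\le V)$ directly by choosing a random ordered basis of $V$.
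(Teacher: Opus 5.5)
Your proof is correct and is essentially the paper's argument: the paper takes the union bound over ordered linearly independent $t$-tuples in $U$ rather than over $t$-dimensional subspaces $W\le U$, but a tuple lies in $V$ iff its span does, so the per-object probability $\prod_{i=0}^{t-1}\frac{2^k-2^i}{2^n-2^i}=\prod_{i=0}^{t-1}\frac{2^{k-i}-1}{2^{n-i}-1}$ is the same, and both arguments end with $2^{kt}\cdot 2^{-(n-k)t}$. (Incidentally, your abandoned per-factor bound $2^{2k-n-t+1}$ already suffices for every $t\ge1$, and the step $\binom{k}{t}_2^2/\binom{n}{t}_2\le\binom{k}{t}_2\cdot\binom{k}{t}_2/\binom{n}{t}_2$ is an identity, so it needs no comparison of Gaussian binomials.)
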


\begin{proof}
The case $t=0$ is immediate, so assume $1\le t\le k$.  If
$\dim(U\cap V)\ge t$, then $V$ contains an ordered linearly independent
$t$-tuple of vectors from $U$.  There are
$$
 \prod_{i=0}^{t-1}(2^k-2^i)\le2^{kt}
$$
such tuples, since after $i$ independent vectors have been chosen, their span
contains $2^i$ vectors.

Fix one of them, $(w_1,\ldots,w_t) \in U^t$.  A double count of pairs consisting of a
$k$-dimensional subspace and an ordered independent $t$-tuple inside it gives
$$
 \Pr(\{w_1,\ldots,w_t\}\subseteq V)
 =\prod_{i=0}^{t-1}\frac{2^k-2^i}{2^n-2^i}.
$$
Indeed, every $k$-dimensional subspace contains
$\prod_{i=0}^{t-1}(2^k-2^i)$ such tuples, while $G$ contains
$\prod_{i=0}^{t-1}(2^n-2^i)$. Moreover, invertible linear maps act
transitively on the ordered independent $t$-tuples in $G$, so each tuple is
contained in the same number of $k$-dimensional subspaces.

For every $0\le i<t$,
$$
 \frac{2^k-2^i}{2^n-2^i}
 =2^{k-n}\frac{1-2^{i-k}}{1-2^{i-n}}
 \le2^{k-n}.
$$
Thus a fixed tuple is contained in $V$ with probability at most
$2^{-(n-k)t}$.  A union bound over the $\leq 2^{kt}$ possible tuples gives as desired that
\begin{displaymath}
 \Pr\bigl(\dim(U\cap V)\ge t\bigr)
 \le2^{kt}2^{-(n-k)t}
 =2^{-(n-2k)t}.\qedhere
\end{displaymath}
\end{proof}

We now prove our high-moment estimate. Set
$$
 g:=n-2k,
 \qquad
 s:=3g.
$$
The preceding lemma says that an intersection of dimension at least $t$
has probability at most $2^{-gt}$.  Since $k\le n/4$, we have
$g\ge n/2\ge1$, so $s$ is a positive integer.

\begin{proposition}[High-moment bound for the overlaps]
\label{prop:average-overlap}
\begin{equation}
 \max_{\alpha\in\Omega}
 \E_{\beta\in\Omega}|\langle p_\alpha,p_\beta\rangle|^s
 <8 \cdot 2^{-gk}.
 \label{eq:average-overlap}
\end{equation}
\end{proposition}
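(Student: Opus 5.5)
Fix $\alpha\in\Omega$ and draw $\beta$ uniformly from $\Omega$. Since $U_\beta$ is uniform among the $k$-dimensional subspaces of $G$, the quantity $T:=\dim(U_\alpha\cap U_\beta)$ has the tail bound of Lemma~\ref{lem:intersection} with $U=U_\alpha$: $\Pr(T\ge t)\le 2^{-gt}$ for $0\le t\le k$. The overlap bound \eqref{eq:overlap} gives
$$|\langle p_\alpha,p_\beta\rangle|^s\le \left(\frac34\right)^{s(k-T)},$$
so the plan is to bound $\E_\beta(3/4)^{s(k-T)}$ by splitting according to $T$. Using $\Pr(T=t)\le\Pr(T\ge t)\le 2^{-gt}$ we get
$$\E_\beta|\langle p_\alpha,p_\beta\rangle|^s\le\sum_{t=0}^k 2^{-gt}\left(\frac34\right)^{s(k-t)} .$$

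With $s=3g$ we have $(3/4)^{s}=(27/64)^{g}$. The summand is then
$$2^{-gt}(27/64)^{g(k-t)}=2^{-gk}\cdot(27/32)^{g(k-t)},$$
because $2^{-gt}=2^{-gk}\cdot 2^{g(k-t)}$ and $2\cdot 27/64=27/32$. Summing over $j=k-t\ge 0$ gives a geometric series:
$$\sum_{j=0}^{k}(27/32)^{gj}\le\frac{1}{1-(27/32)^g}\le\frac{1}{1-27/32}=\frac{32}{5}<8.$$
The last step uses $g\ge1$. This bound is uniform in $\alpha$, so it also bounds the maximum over $\alpha$, and we obtain \eqref{eq:average-overlap} with the strict inequality $32/5<8$.

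The main point to get right is the exponent bookkeeping. The choice $s=3g$ is exactly what makes $(3/4)^{s}\cdot 2^{g}<1$, so that the high-intersection terms dominate and the sum is at most a constant times its $t=k$ term $2^{-gk}$. If $(3/4)^3\cdot 2$ had exceeded $1$, the sum would have been dominated by $t=0$ and the bound would fail. There is one minor check: $\Pr(T=t)$ might be replaced by the cruder $\Pr(T\ge t)$. This is fine, since all terms are nonnegative and each event $\{T=t\}\subseteq\{T\ge t\}$, so the termwise substitution only increases the sum.
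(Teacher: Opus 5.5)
Your proof is correct and matches the paper's argument step for step: you bound $\Pr(T=t)$ by the tail $2^{-gt}$ from Lemma~\ref{lem:intersection}, bound the overlap by $(3/4)^{s(k-T)}$ via \eqref{eq:overlap}, and with $s=3g$ sum $2^{-gk}\sum_{\ell}(27/32)^{g\ell}\le\frac{32}{5}\,2^{-gk}<8\cdot2^{-gk}$ using $g\ge1$. Your remark that $s=3g$ is ``exactly'' what is needed is slightly overstated, since any $s$ with $(3/4)^s2^g$ bounded away from $1$ would work, but this does not affect the proof.
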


\begin{proof}
Fix any $\alpha$, and in the arguments below, choose $\beta$ uniformly from $\Omega$. Let
$$
 T:=\dim_{\F_2}(U_\alpha\cap U_\beta)\in\{0,\ldots,k\}.
$$
The subspace $U_\beta$ is uniform among the $k$-dimensional subspaces of $G$.
For every $t$, equation~\eqref{eq:overlap} and
Lemma~\ref{lem:intersection} give
$$
 \Pr(T=t)\le\Pr(T\ge t)\le2^{-gt},
 \qquad
 |\langle p_\alpha,p_\beta\rangle|^s
 \le\left(\frac34\right)^{s(k-t)}
 \quad\text{when }T=t.
$$
The choice $s=3g$ balances these bounds:
\begin{align*}
 2^{-gt}\left(\frac34\right)^{s(k-t)}
 &=2^{-gt}\left(\frac{27}{64}\right)^{g(k-t)}\\
 &=2^{-gk}\left(\frac{27}{32}\right)^{g(k-t)}
 \le2^{-gk}.
\end{align*}
Keeping the factor $(27/32)^{g(k-t)}$ from the preceding display,
rather than replacing it by $1$, and writing $\ell=k-t$, we obtain
\begin{align*}
 \E_\beta|\langle p_\alpha,p_\beta\rangle|^s
 &\le\sum_{t=0}^k
 2^{-gt}\left(\frac34\right)^{s(k-t)}\\
 &=2^{-gk}\sum_{\ell=0}^k
 \left(\frac{27}{32}\right)^{g\ell}\\
 &\le2^{-gk}\sum_{\ell=0}^{\infty}
 \left(\frac{27}{32}\right)^\ell
 =\frac{32}{5}\,2^{-gk}
 <8 \cdot 2^{-gk}.
\end{align*}
Here we used $g\ge1$.  Since this held for any arbitrary $\alpha$, this proves the
proposition.
\end{proof}

\subsection{Combining the bounds}\label{subsec:combine}

Lemma~\ref{lem:signal} gives
$$
 \left|\E_\alpha\langle q_\alpha,p_\alpha\rangle\right|
 \ge\operatorname{Re}\E_\alpha z_\alpha
 >\frac12\left(\frac{19}{20}\right)^k.
$$
The vectors $q_\alpha$ take at most $3^r$ values, so by
Lemma~\ref{lem:few-outputs} with $M=3^r$, and then
Proposition~\ref{prop:average-overlap},
\begin{equation}
 \left[\frac12\left(\frac{19}{20}\right)^k\right]^{2s}
 <8\cdot3^r\,2^{-gk}.
 \label{eq:decisive}
\end{equation}
Since $n\ge2000$, equation~\eqref{eq:k} gives
$k\ge n/4-2\ge n/5$.  Together with $g\ge n/2$, this gives
$gk\ge n^2/10$. We also have $g\le n$.  Numerically,
$\log(20/19)\approx0.074<1/8$.  Taking logarithms in
\eqref{eq:decisive} and using $s=3g$ gives
\begin{align*}
 r\log 3
 >gk-3-2s\left(1+k\log\frac{20}{19}\right)
 >\frac14gk-6g-3
 \ge\frac{n^2}{40}-6n-3
 >\frac{n^2}{50}.
\end{align*}
The assumed rank bound $r \le\left\lfloor n^2 / 80\right\rfloor$ and $\log 3<8/5$, however, give
$r\log 3<n^2/50$, a contradiction.  This proves the theorem.
\end{proof}

\paragraph{Comparison with prior rigidity lower bounds for $H_n$.}
Like de Wolf's quantum-inspired proof of Hadamard rigidity~\cite{deWolf2006}, we pass
to complex unit vectors and turn entrywise closeness into inner products.
De Wolf encodes rows as low-dimensional quantum states, whereas here we use low rank
$r$ to give only $3^r$ possible encoded outputs of the random probes.
Alman and Liang~\cite[Lemma~4.1]{AlmanLiang2025} also use roots of unity, but
lift the entire finite-field approximation by polynomial interpolation and
then apply a singular-value bound; we instead expand random signed outputs
in the normalized columns of $H_n$ and bound their tensor-powered overlaps.
Unlike classical
$H_n$ rigidity proofs, our argument does not extract an unchanged
high-rank submatrix~\cite{KashinRazborov1998,Lokam2001,deWolf2006}.
\section{Proof of Lemma~\ref{lem:few-outputs}}
\label{sec:few-outputs-proof}

\begin{proof}
Let
$$
 \mathcal C:=\{q_\alpha:\alpha\in\Omega\},\qquad
 P_\alpha:=p_\alpha^{\otimes s},\qquad
 Q_q:=q^{\otimes s}\quad(q\in\mathcal C).
$$
Then $\langle Q_q,P_\alpha\rangle=\langle q,p_\alpha\rangle^s$.  Set
$b:=|\E_\alpha\langle q_\alpha,p_\alpha\rangle|$.  By the triangle
inequality and Jensen's inequality,
$$
 b^s
 \le\left(\E_\alpha|\langle q_\alpha,p_\alpha\rangle|\right)^s
 \le\E_\alpha|\langle q_\alpha,p_\alpha\rangle|^s
 =\E_\alpha|\langle Q_{q_\alpha},P_\alpha\rangle|.
$$
For each $\alpha$, let $\widetilde P_\alpha$ be the vector gotten by multiplying $P_\alpha$ by the appropriate unit complex scalar so that
$$
 \langle Q_{q_\alpha},\widetilde P_\alpha\rangle
 =|\langle Q_{q_\alpha},P_\alpha\rangle|.
$$
For $q\in\mathcal C$, set
$$
 S_q:=\E_\alpha[\mathbf 1_{\{q_\alpha=q\}}\widetilde P_\alpha].
$$
Then
$$
 \langle Q_q,S_q\rangle
 =\E_\alpha\!\left[\mathbf 1_{\{q_\alpha=q\}}
   |\langle Q_q,P_\alpha\rangle|\right]\ge0,
$$
and Cauchy--Schwarz gives
$$
 b^s\le\sum_{q\in\mathcal C}\langle Q_q,S_q\rangle
 \le\left(|\mathcal C|\sum_{q\in\mathcal C}\|S_q\|_2^2\right)^{1/2}.
$$
For independent uniform $\alpha,\beta\in\Omega$, expanding the squared
norms gives the real nonnegative quantity
\begin{align*}
 \sum_{q\in\mathcal C}\|S_q\|_2^2
 &=\E_{\alpha,\beta}\!
   \left[\mathbf 1_{\{q_\alpha=q_\beta\}}
   \langle\widetilde P_\alpha,\widetilde P_\beta\rangle\right]\\
 &\le\E_{\alpha,\beta}\!
   \left[\mathbf 1_{\{q_\alpha=q_\beta\}}
   |\langle P_\alpha,P_\beta\rangle|\right]\\
 &\le\E_{\alpha,\beta}|\langle p_\alpha,p_\beta\rangle|^s
 \le\max_{\gamma\in\Omega}\E_\beta
   |\langle p_\gamma,p_\beta\rangle|^s.
\end{align*}
Squaring and using $|\mathcal C|\le M$ proves \eqref{eq:few-outputs}.
\end{proof}

\section*{AI Disclosure}

Theorem~\ref{thm:main} was originally proved by the author without AI assistance, although the proof was substantially longer and more complicated than the argument presented here, including additional spectral and probabilistic arguments. The author later asked ChatGPT 5.6 whether it could simplify the proof, and (after some back-and-forth) ChatGPT found a number of simplifying ideas which are present in this paper. Most notably, it proposed Lemma~\ref{lem:few-outputs} using a bound on $M$ as a simpler way to make use of the low-rank part of a rigidity decomposition, and suggested modifications to later lemmas to go with this. ChatGPT also gave a number of writing suggestions and corrections, although the author ultimately takes full responsibility for the paper's contents.

\bibliographystyle{alpha}
\bibliography{walsh_f3_quadratic_rigidity_references}

\end{document}